\renewcommand\footnotetextcopyrightpermission[1]{} 
\newtheorem{mylemma}{Lemma}
\newcommand{\eq}[1]{Eq.~\eqref{#1}}
\newcommand{\myitem}[1]{\vspace{0.25\baselineskip}\noindent\textbf{#1}}
\definecolor{orange}{rgb}{1,0.5,0}
\newcommand{\ourAlgo}[0]{\texttt{CABaRet}\xspace}
\newcommand{\secref}[1]{Sec.~\ref{#1}}
\begin{document}

\title{CABaRet: Leveraging Recommendation Systems\\ for Mobile Edge Caching}


\author{Savvas Kastanakis}
\affiliation{%
  \institution{University of Crete and FORTH, Greece}
}
\email{kastan@csd.uoc.gr}

\author{Pavlos Sermpezis}
\affiliation{%
  \institution{FORTH, Greece}
}
\email{sermpezis@ics.forth.gr}

\author{Vasileios Kotronis}
\affiliation{%
  \institution{FORTH, Greece}
}
\email{vkotronis@ics.forth.gr}

\author{Xenofontas Dimitropoulos}
\affiliation{%
  \institution{University of Crete and FORTH, Greece}
}
\email{fontas@ics.forth.gr}

\begin{abstract}
Joint caching and recommendation has been recently proposed for increasing the efficiency of mobile edge caching. While previous works assume collaboration between mobile network operators and content providers (who control the recommendation systems), this might be challenging in today's economic ecosystem, with existing protocols and architectures. In this paper, we propose an approach that enables cache-aware recommendations without requiring a network and content provider collaboration. We leverage information provided publicly by the recommendation system, and build a system that provides cache-friendly and high-quality recommendations. We apply our approach to the YouTube service, and conduct measurements on YouTube video recommendations and experiments with video requests, to evaluate the potential gains in the cache hit ratio. Finally, we analytically study the problem of caching optimization under our approach. Our results show that significant caching gains can be achieved in practice; 8 to 10 times increase in the cache hit ratio from cache-aware recommendations, and an extra 2 times increase from caching optimization.
\end{abstract}


\keywords{Mobile Edge Networks; Recommendation Systems; Joint Caching and Recommendation
}

\begin{CCSXML}
<ccs2012>
<concept>
<concept_id>10003033.10003079</concept_id>
<concept_desc>Networks~Network performance evaluation</concept_desc>
<concept_significance>500</concept_significance>
</concept>
<concept>
<concept_id>10003033.10003106.10003113</concept_id>
<concept_desc>Networks~Mobile networks</concept_desc>
<concept_significance>500</concept_significance>
</concept>
</ccs2012>
\end{CCSXML}

\ccsdesc[500]{Networks~Network performance evaluation}
\ccsdesc[500]{Networks~Mobile networks}

\maketitle

\section{Introduction}
\label{sec:intro}
Mobile Edge Caching (MEC) is one of the key technologies for 5G networks~\cite{MEC:white} that can reduce latency of service delivery and offload traffic from backhaul links. In MEC, caches are located at the edge of the mobile network (e.g., base stations), and thus have limited capacity and serve small --and frequently changing-- user populations~\cite{Paschos-infocom2016}. These factors, despite the advances in caching policies~\cite{Paschos-infocom2016} or delivery techniques~\cite{femto}, limit the possible gains from MEC: capacity is a tiny fraction of today's content catalogs, and traffic is highly variable; hence, a large number of user requests is for non-cached contents, i.e., not served in the edge.

A recently proposed solution for increasing the efficiency of MEC is jointly caching and recommending content~\cite{sermpezis-sch-globecom,chatzieleftheriou2017caching,giannakas-wowmom-2018}. Recommendation Systems (RS) are integrated in many popular services (e.g., YouTube, Netflix) and significantly affect the user demand~\cite{RecImpact-IMC10, gomez2016netflix}. Therefore, steering recommendations towards cached contents, can significantly increase the cache hit ratio, even with small caches or populations.

However, joint caching and recommendation requires collaboration between network operators and Content Providers (CPs). This might be challenging, due to the different scope of these entities, and the constraints of current network protocols and architectures. For example, CPs encrypt traffic (e.g., https) and do not typically share user-related information~\cite{leguay2017cryptocache}. 

To bridge this gap, we propose an approach that is applicable in today's networks: the network operator leverages the information made available by the RS, and, based on this, provides \textit{independently of the CP} high-quality and cache-friendly recommendations that increase the efficiency of MEC. Specifically, we consider the YouTube service, and design a system/application that (i) obtains video relations from the YouTube API, based on which (ii) it builds extended lists of directly and indirectly related videos, and (iii) carefully steers initial recommendations --and thus user demand-- towards cached videos. These operations can take place without any tight collaboration with the CP, thus facilitating the application of joint caching and recommendation approaches by network operators (or other entities), without any need for modifications in architectures or protocols.

Our contributions are summarized as follows:
\begin{itemize}[leftmargin=*]
\item We propose an approach that enables joint caching and recommendation, without requiring collaboration between network operators and CPs (\secref{sec:overview}). 
\item We design an algorithm (named \ourAlgo) that leverages available information provided by a RS, and returns cache-aware recommendations (\secref{sec:recommendation}).
\item We perform extensive measurements over the YouTube service. Our results show that significant caching gains can be achieved in practice; even in conservative scenarios, our approach increases the cache hit ratio by a factor of 8 to 10 (\secref{sec:measurements}).
\item We analytically study the problem of caching optimization under recommendations from \ourAlgo, and propose an approximation algorithm. We show that when caching is controlled by the network operator, an extra 2 times increase in the cache hit ratio can be achieved (\secref{sec:joint}).  
\end{itemize}
Finally, while in this paper we focus on the YouTube service, which provides a public API, our approach can be extended to other video/radio services (e.g., Neflix, Vimeo, Spotify, Pandora) as well, e.g., using offline crawling processes (in case APIs are not available) for discovering content relations.

\section{System Overview}
\label{sec:overview}

The proposed approach can be implemented in a lightweight system/application that runs on mobile devices, and is triggered either by the network operator or by the user. The system is composed of the \textit{user interface} (UI), the \textit{back-end}, and the \textit{recommendation module}, as depicted in Fig.~\ref{fig:system-overview}.

\myitem{User Interface (UI).} The UI resembles the original content service UI. For instance, in the YouTube case, the UI contains a search bar, a video player, and a list of related videos\footnote{In the remainder we use the terms \textit{content} and \textit{video} interchangeably.}. The users search, browse, and watch videos through the UI.

\myitem{Back-end.} The back-end is responsible for (i) retrieving the list of cached video IDs (e.g., in the form of a text file), and (ii) streaming videos to the UI. Depending on the scenario, the list of cached video IDs can be already known to the network operator, e.g., in the case of network-controlled caching. Alternatively, they can be requested from the content provider directly\footnote{Note that this refers to an \textit{aggregate} list of the IDs of all the videos stored in a cache, without containing any user information or violating privacy policies.}, or discovered through offline network measurements (e.g., latency~\cite{cache-centric-video-recommendation}, or DNS resolution~\cite{adhikari2011you}) by the network operator. The video requested by the user is delivered  / streamed from the CP's (e.g., YouTube's) origin server, or the CP's cache, or an edge cache. In case the caches are controlled by the content provider (which is the most prevalent scenario today), the user-service communication can be encrypted (e.g., https requests directly to YouTube) and remain transparent to the network operator.


\myitem{Recommendation Module.} The recommendation module is triggered upon each content request, and (i) receives as input the video $v$ that the user currently watches, (ii) retrieves from the YouTube API a list of video IDs directly/indirectly related to $v$, (iii) extracts from the back-end the list of cached video IDs, and (iv) builds a list of \textit{related and cached} video IDs and recommends it to the user, according to the cache-aware recommendation algorithm of \secref{sec:recommendation}. This recommendation process is lightweight and can return the list of recommendations very fast (e.g., $\sim$1sec. in our prototype), without affecting the user experience.

\begin{figure}
\centering
\includegraphics[width=1\linewidth]{./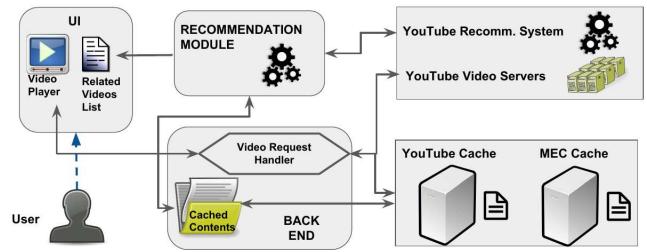}
\caption{System overview.}
\label{fig:system-overview}
\end{figure}

\section{Cache-Aware Recommendations}
\label{sec:recommendation}

In existing approaches, e.g.,~\cite{sermpezis-sch-globecom,chatzieleftheriou2017caching,giannakas-wowmom-2018}, the ``most related'' contents that are also cached, are recommended to users. 
However, this requires the system to be aware of the \textit{content relations} (e.g., similarity scores, user preferences/history, trending videos), i.e., information owned by the content provider. Such data are unlikely to be disclosed to third parties, due to privacy and/or economic reasons (e.g.,~advertising). 

In our approach, the system leverages information about content relations that is made publicly available by the RS of the content service (i.e., YouTube in this paper). In particular, when a user watches a video $v$, the system requests from the YouTube API a list of video IDs $\mathcal{L}$ related to $v$, i.e., the videos that YouTube would recommend to the user. Then it requests the related video IDs for every video in $\mathcal{L}$ and adds them in the end of $\mathcal{L}$, and so on, in a Breadth-First Search (BFS) manner. In the end of the process, the list $\mathcal{L}$ contains IDs of videos directly and indirectly related to $v$, from which the top $N$ cached and/or highly related to $v$ videos are finally recommended to the user. The list $\mathcal{L}$ is (i) much larger than the list of videos recommended by YouTube, and thus it is more probable to contain cached videos that are related to $v$, and (ii) built based on video relations provided by YouTube itself, which satisfies a high quality of recommendations.


We detail our recommendation algorithm (\ourAlgo) in \secref{sec:detailed-algo}, and discuss the related design implications in \secref{sec:tune-algo}.

\begin{figure}
\centering
\includegraphics[width=1\linewidth,height=0.6\linewidth]{./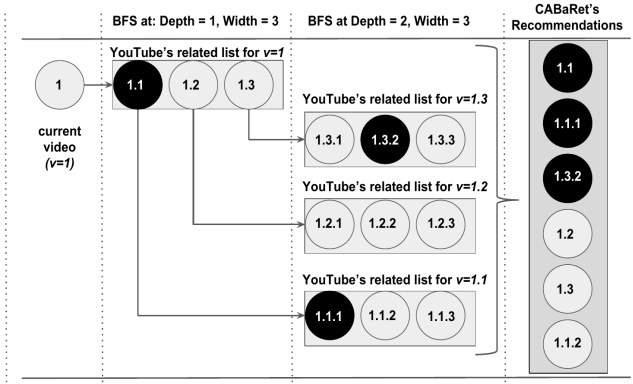}
\caption{CABaRet: example with $D_{BFS}=2$, $W_{BFS}=3$, $N=6$. Cached videos are denoted with black color.}
\label{fig:rec-algo}
\end{figure}

\subsection{The Recommendation Algorithm}\label{sec:detailed-algo}

\myitem{Input.} The recommendation algorithm receives as input:
\begin{itemize}[leftmargin=*]
\item $v$: the video ID (or URL) which is currently watched
\item $N$: the number of videos to be recommended
\item $\mathcal{C}$: the list with the IDs of the cached videos
\item $D_{BFS}$: the depth to which the BFS proceeds
\item $W_{BFS}$: the number of related videos that are requested \textit{per content} from the YouTube API (i.e., the ``width'' of BFS)
\end{itemize}

\myitem{Output.} The recommendation algorithm returns as output:
\begin{itemize}[leftmargin=*]
\item $\mathcal{R}$: ordered list of $N$ video IDs to be recommended.
\end{itemize}
\myitem{Workflow.} \ourAlgo searches for videos related to video $v$ in a BFS manner as follows (\textit{line 1} in Algorithm~\ref{alg:recommendation}). Initially, it requests the $W_{BFS}$ videos related to $v$, and adds them to a list $\mathcal{L}$ in the order they are returned from the YouTube API. For each video in $\mathcal{L}$, it further requests $W_{BFS}$ related videos, as shown in Fig.~\ref{fig:rec-algo}, and adds them in the end of $\mathcal{L}$. It proceeds similarly for the newly added videos, until the depth $D_{BFS}$ is reached; e.g., if $D_{BFS}=2$, then $\mathcal{L}$ contains $W_{BFS}$ video IDs related to $v$, and $W_{BFS}\cdot W_{BFS}$ video IDs related to the related videos~of~$v$.

Then, \ourAlgo searches for video IDs in $\mathcal{L}$ that are also included in the list of cached videos $\mathcal{C}$ 
and adds them to the list of video IDs to be recommended $\mathcal{R}$, until all IDs in $\mathcal{L}$ are explored or the list $\mathcal{R}$ contains $N$ video IDs, whichever comes first (\textit{lines 4--9}).
If after this step, $\mathcal{R}$ contains less than $N$ video IDs, $N-|\mathcal{R}|$ video IDs from the head of the list $\mathcal{L}$ are added to $\mathcal{R}$; these IDs correspond to the top $N-|\mathcal{R}|$ non-cached videos that are directly related to video $v$ (\textit{lines 10--15}).


\begin{algorithm}
\begin{algorithmic}[1]
\caption{\\~\textit{CABaRet}: Cache-Aware \& BFS-related  Recommendations}\label{alg:recommendation}
\Statex {$Input: v, N, \mathcal{C}, D_{BFS}, W_{BFS}$} 
\State $\mathcal{L}\gets BFS(v,D_{BFS},W_{BFS})$ \Comment ordered set of video IDs
\State $\mathcal{R}\gets \emptyset$ \Comment ordered set of video IDs to be recommended
\State $i\gets 1$
\For {$c\in \mathcal{L}$}
	\If {$i\leq N$ and $c\in \mathcal{C}$}
		\State $\mathcal{R}.{append}(c)$
		\State $i\gets i+1$
	\EndIf
\EndFor
\For {$c\in \mathcal{L}\setminus \mathcal{R}$}
	\If {$i\leq N$}
    	\State $\mathcal{R}.{append}(c)$
		\State $i\gets i+1$
    \EndIf
\EndFor

\State $return~~\mathcal{R}$
\end{algorithmic}
\end{algorithm}

\subsection{Implications and Design Choices}\label{sec:tune-algo}

\myitem{High-quality recommendations.} Using the YouTube recommendations ensures strong relations between videos that are directly related to $v$ (i.e., BFS at depth 1). Moreover, while the YouTube RS finds hundreds of videos highly related to $v$, only a subset of them are recommended to the user~\cite{covington2016deep}. The rationale behind \ourAlgo is to explore the related videos that are not communicated to the user. To this end, based on the fact that related videos are similar and have high probability of sharing recommendations (i.e., if video $a$ is related to $b$, and $b$ to $c$, then it is probable that $c$ relates to $a$), \ourAlgo tries to infer these latent video relations through BFS. Hence, videos found by BFS in depths $>1$ are also (indirectly) related to $v$ and probably good recommendations as well.

To further support the above claim, we collect and analyze datasets of related YouTube videos. Specifically, we consider the set of most popular  videos, let $\mathcal{P}$, in a region, and for each $v\in\mathcal{P}$ we perform BFS by requesting the list of related videos (similarly to \textit{line 1} in \ourAlgo). We use as parameters $W_{BFS}=\{10,20,50\}$ and $D_{BFS}=2$, i.e., considering the directly related videos (depth 1) and indirectly related videos with depth 2. We denote as $\mathcal{R}_{1}(v)$ and $\mathcal{R}_{2}(v)$, the set of videos found at the first and second depth of the BFS, respectively. We calculate the fraction of the videos in $\mathcal{R}_{1}(v)$ that are also contained in $\mathcal{R}_{2}(v)$, i.e., $I(v) = \frac{| \mathcal{R}_{1}(v)  \cap \mathcal{R}_{2}(v)|}{|\mathcal{R}_{1}(v)|}$. High values of $I(v)$ indicate a strong similarity of the initial content $v$ with the set of indirectly related contents at depth 2.

Table~\ref{table:I-v} shows the median values of $I(v)$, over the $|\mathcal{P}|=50$ most popular contents in the region of Greece (GR), for different BFS widths. As it can be seen, $I(v)$ is very high for most of the initial videos $v$. For larger values of $W_{BFS}$, $I(v)$ increases, and when we fully exploit the YouTube API capability, i.e., for $W_{BFS}$=50, which is the maximum number of related videos returned by the YouTube API, the median value of $I(v)$ becomes larger than $0.9$. Finally, we measured the $I(v)$ in other regions as well, and observed that even in large (size/population) regions, the $I(v)$ values remain high, e.g., in the United States (US) region, $I(v)$=0.8 for $W_{BFS}$=50.
\begin{table}[h]
\centering
\caption{$I(v)$ vs. $W_{BFS}$ for the region of GR
.}
\vspace{-\baselineskip}
\label{table:I-v}
\begin{tabular}{|cccc|
}
\hline
{$W_{BFS}:$}	&{10}&{20}&{50}
\\
{$I(v):$}	&0.70&0.85&0.92
\\
\hline
\end{tabular}
\vspace{-\baselineskip}
\end{table}




\myitem{Tuning \ourAlgo.} 
The parameters $D_{BFS}, W_{BFS}$ can be tuned to achieve a desired performance, e.g., in terms of probability of recommending a cached or highly related video.


For large $D_{BFS}$, the similarity between $v$ and the videos \textit{at the end of the list} $\mathcal{L}$ is expected to weaken, while for small $D_{BFS}$ the list $\mathcal{L}$ is shorter and it is less probable that a cached content is contained in it. Hence, the parameter $D_{BFS}$ can be used to achieve a trade-off between quality of recommendations (small $D_{BFS}$) and probability of recommending a cached video (large $D_{BFS}$). 
The number of related videos requested per content $W_{BFS}$, can be interpreted similarly to $D_{BFS}$. A small $W_{BFS}$ leads to considering only top recommendations per video, while a large $W_{BFS}$ leads to a larger~list~$\mathcal{L}$. 

\textit{Remark:} YouTube imposes quotas on the API requests per application per day, which prevents API users from setting the parameters $W_{BFS}$ and $D_{BFS}$ to arbitrarily large values. 


In practice, \ourAlgo can be fine-tuned through experimentation with real users, e.g., A/B testing iterations, which is a common approach for tuning recommendation systems~\cite{covington2016deep}.

\section{Measurements and Evaluation}
\label{sec:measurements}
We conduct measurements and experiments over the YouTube service\footnote{Our experiments and use of the YouTube API conform to the YouTube terms of service \url{https://www.youtube.com/static?template=terms}.}, to investigate the performance (in terms of cache hit ratios) of our approach in MEC scenarios. The setup of the scenarios is presented in \secref{sec:experiments-setup}, and the results in~\secref{sec:results}.

\subsection{Setup}
\label{sec:experiments-setup}

\myitem{The YouTube API} provides a number of functions to retrieve information about videos, channels, user ratings, etc. In our measurements, we request 
the following information:
\begin{itemize}
\item the most popular videos in a region (max. 50)
\item the list of related videos (max. 50) for a given video
\end{itemize}
\textit{Remark}: In the remainder, we present results collected during March 2018, for the region of Greece (GR). Nevertheless, our insights hold also in the other regions we tested (e.g., US).
.


\myitem{Caching.} We assume a MEC cache storing the most popular videos in a region. We populate the list of cached contents with the top $C$ 
video IDs returned from the YouTube API.

\myitem{Recommendations.} We consider two classes of scenarios with (i) YouTube and (ii) \ourAlgo recommendations. In both cases, when a user enters the UI, the $50$ most popular videos in her region are recommended to her (as in YouTube's front page). Upon watching a video $v$, a list of $N=20$ videos is recommended to the user; the list is (i) composed of the top $N$ directly related videos returned from the YouTube API (YouTube scenarios), or (ii) generated by \ourAlgo with parameters $N$, $W_{BFS}$ and $D_{BFS}$ (\ourAlgo scenarios).

\myitem{Video Demand.} In each experiment, we assume a user that enters the UI and watches one of the initially recommended (i.e., $50$ most popular) videos at random. Then, the system recommends a list of $N$ videos ($r_{1},r_{2},...,r_{N}$), and the user selects with probability $p_{i}$ to watch $r_{i}$ next. We set the probabilities $p_{i}$ to depend on the order of appearance --and not the content--
and consider \textit{uniform} ($p_{i}=\frac{1}{N}$) and \textit{Zipf} ($p_{i}\sim\frac{1}{i^{\alpha}}$) scenarios; the higher the exponent $\alpha$ of the Zipf distribution, the more preference is given by the user to the top recommendations (user preference to top recommendations has been observed in YouTube traffic~\cite{cache-centric-video-recommendation}).


\subsection{Results}\label{sec:results}
\subsubsection{Single Requests}\label{sec:results-single}~\\
We first consider scenarios of single requests (similarly to~\cite{sermpezis-sch-globecom,chatzieleftheriou2017caching}). In each experiment $i$ ($i=1,...,M$) a user watches one of the top popular videos, let $v_{1}(i)$, and then follows a recommendation and watches a video $v_{2}(i)$. We measure the Cache Hit Ratio (CHR), which we define as the fraction of the \textit{second requests} of a user that are for a cached video (since the first request is always for a cached --top popular-- video):
\begin{equation*}
CHR = \frac{1}{M}\cdot \sum_{i=1}^{M} \mathbb{I}_{v_{2}(i)\in \mathcal{C}}
\end{equation*}
where $\mathbb{I}_{v_{2}(i)\in \mathcal{C}} =1$ if $v_{2}(i)\in \mathcal{C}$ and $0$ otherwise, and $M$ the number of experiments\footnote{We considered all possible experiments on the collected dataset.}.

\myitem{CHR vs. BFS parameters.} Fig.~\ref{fig:chr-vs-bfs-param} shows the CHR achieved by \ourAlgo under various parameters, along with the CHR under regular YouTube recommendations, when caching all the most popular videos ($|\mathcal{C}|$=50). The efficiency of caching significantly increases with \ourAlgo, even when only directly related contents are recommended ($D_{BFS}$=1), i.e., \textit{without loss in recommendation quality}. Just reordering the list of YouTube recommendations (as suggested in~\cite{cache-centric-video-recommendation}), brings gains when $p_{i}$ is not uniformly distributed. However, the added gains by our approach are significantly higher. As expected, the CHR increases for larger $W_{BFS}$ and/or $D_{BFS}$; e.g., \ourAlgo for $W_{BFS}$=50 and $D_{BFS}$=2, achieves 8-10 times 
higher CHR than regular YouTube recommendations. Also, the CHR increases for more skewed $p_{i}$ distributions, since top recommendations are preferred and \ourAlgo places cached contents at the top of the recommendation list.

In experiments concerning the --larger-- US region, the CHR values are lower for both regular YouTube ($<0.5\%$) and \ourAlgo ($1\%-43\%$) recommendations, due to the fact that the top popular videos appear with lower frequency in the related lists. However, the \textit{relative gains} from \ourAlgo are consistent with (or even higher than) the presented results.

\begin{figure}\centering
\begin{minipage}[t]{0.46\linewidth}
\centering
\includegraphics[width=1\columnwidth]{./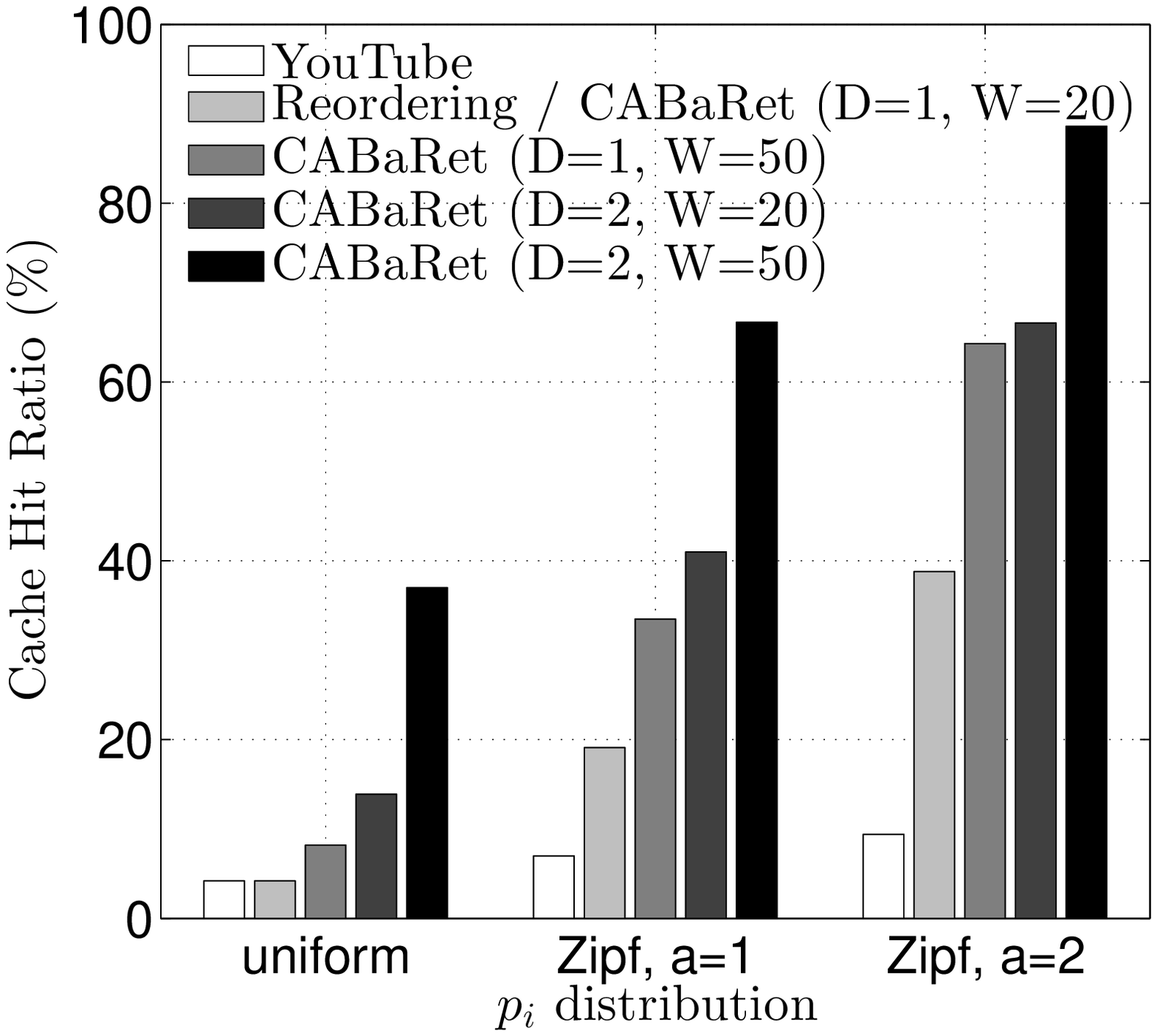}
\caption{CHR under different BFS parameters.
}
\label{fig:chr-vs-bfs-param}
\end{minipage}
\hspace{0.03\linewidth}
\begin{minipage}[t]{0.47\linewidth}
\centering
\includegraphics[width=1\columnwidth]{./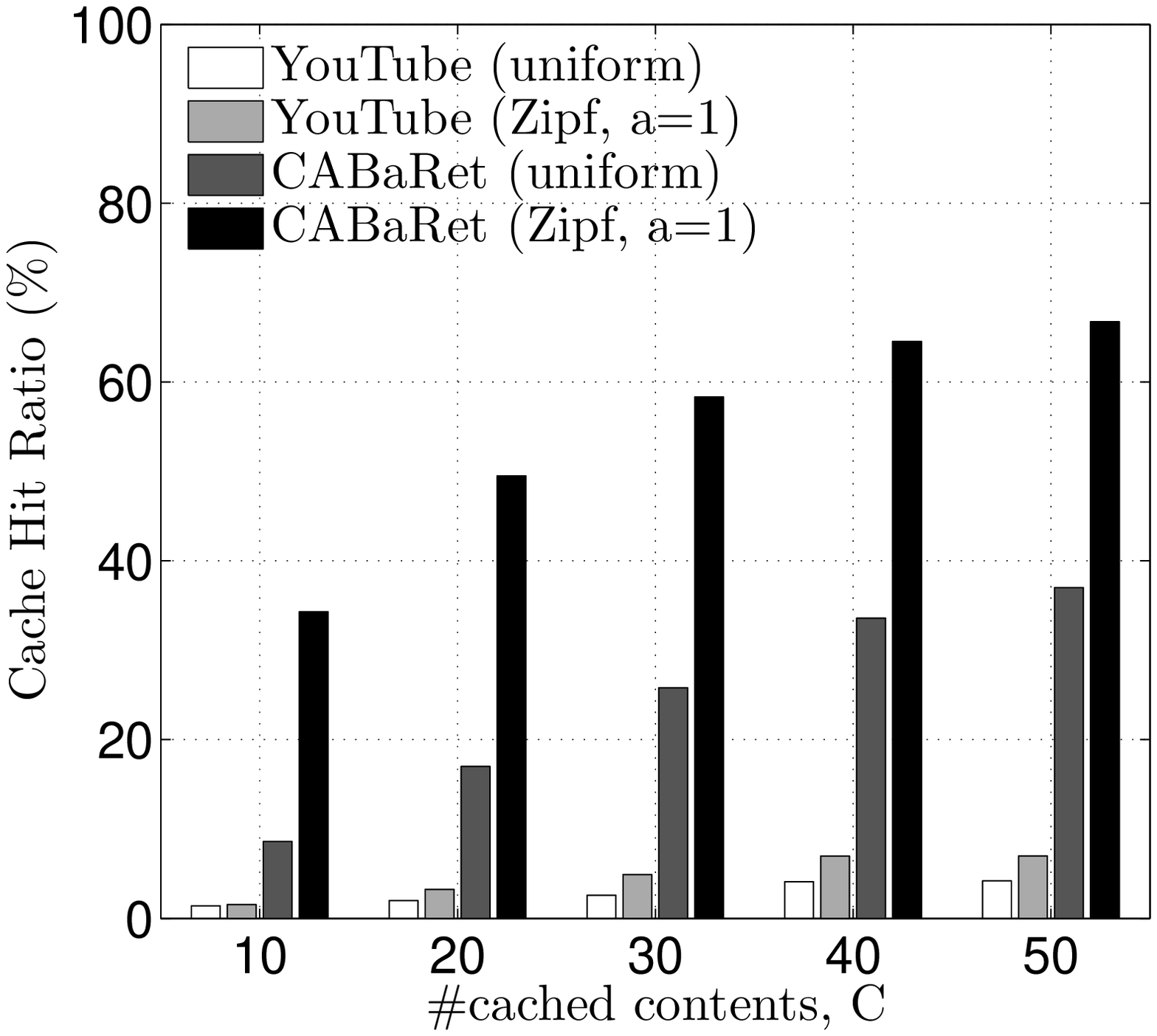}
\caption{CHR vs. \#~cached contents $C$\\ ($W_{BFS}$=50, $D_{BFS}$=2).}
\label{fig:chr-vs-C-d2}
\end{minipage}
\end{figure}

\myitem{CHR vs. number of cached videos.} We further consider scenarios with varying number of cached contents $C = |\mathcal{C}|$. In each scenario, we assume that the $C$ most popular contents are cached. Fig.~\ref{fig:chr-vs-C-d2} 
shows the CHR achieved by \ourAlgo, in comparison to scenarios under regular YouTube recommendations. The results are consistent for all considered values of $C$; the CHR under \ourAlgo is significantly higher than in the YouTube case. Moreover, even when caching a small subset of the most popular videos, \ourAlgo brings significant gains. E.g., by caching $C=10$ out of the $50$ top related contents \ourAlgo increases the CHR from $2\%$ and $3.2\%$ to $17\%$ and $50\%$, for the uniform and Zipf($\alpha$=1) scenarios, respectively.



\subsubsection{Sequential Requests}~\\
We now test the performance of our approach in scenarios where users enter the system and watch a sequence of $K$, $K>2$, videos (similarly to~\cite{giannakas-wowmom-2018}, and in contrast to the previous case, where they watch only two videos, i.e., $K=2$). At each step, the system recommends a list of videos to the user by applying \ourAlgo on the currently watched video. We denote as $v_{k}(i)$ the $k^{th}$ video requested/watched by a user in experiment $i$. We measure the CHR, which is now defined as
\begin{equation*}
CHR = \frac{1}{M}\cdot \sum_{i=1}^{M} \sum_{k=2}^{K} \mathbb{I}_{v_{k}(i)\in \mathcal{C}}
\end{equation*}
where $\mathbb{I}_{v_{k}(i)\in \mathcal{C}} =1$ if $v_{k}(i)\in \mathcal{C}$ and $0$ otherwise, over $M=100$ experiments per scenario.

Moving ``farther'' from the initially requested video (which belongs to the list of most popular and cached videos) through a sequence of requests, we expect the CHR to decrease, due to lower similarity of the requested and cached videos. However, as Fig.~\ref{fig:chr-vs-K-sequential} shows, the decrease in the CHR (under \ourAlgo recommendations) is not large. The CHR remains close to the case of single requests (i.e., for $K$=2 in the x-axis), indicating that our approach performs well even when we are several steps far from the cached videos. In fact, caching more than the top most popular videos appearing on the front page, would further reduce the CHR decrease.

\begin{figure}\centering
\begin{minipage}[t]{0.47\linewidth}
\centering
\includegraphics[width=1\columnwidth]{./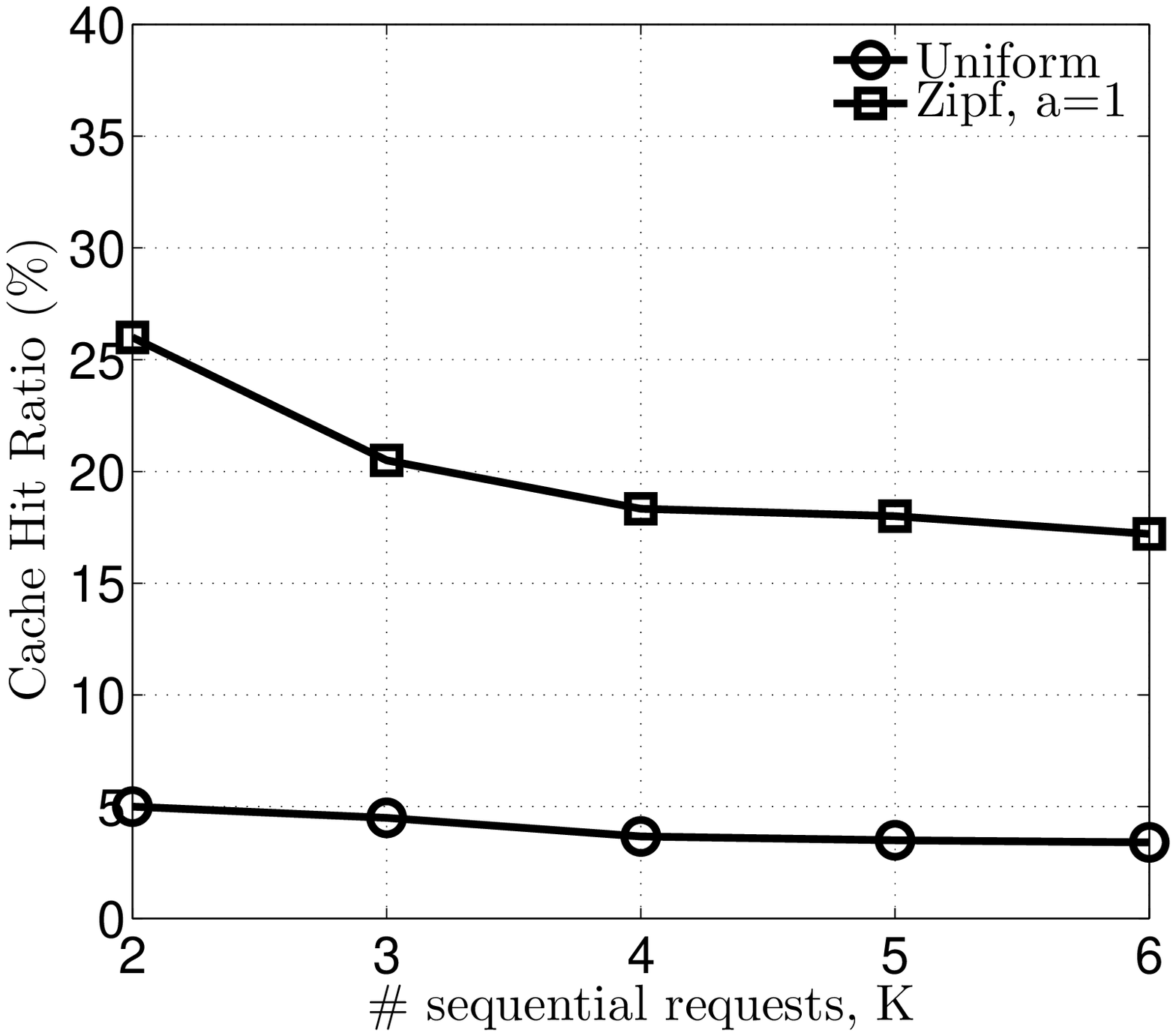}
\caption{CHR vs. \#~requests in sequence $K$ ($C$=20, $W_{BFS}$=20, $D_{BFS}$=2). \textit{Note}: y-axis up to 40\%.}
\label{fig:chr-vs-K-sequential}
\end{minipage}
\hspace{0.03\linewidth}
\begin{minipage}[t]{0.47\linewidth}
\centering
\includegraphics[width=1\columnwidth]{./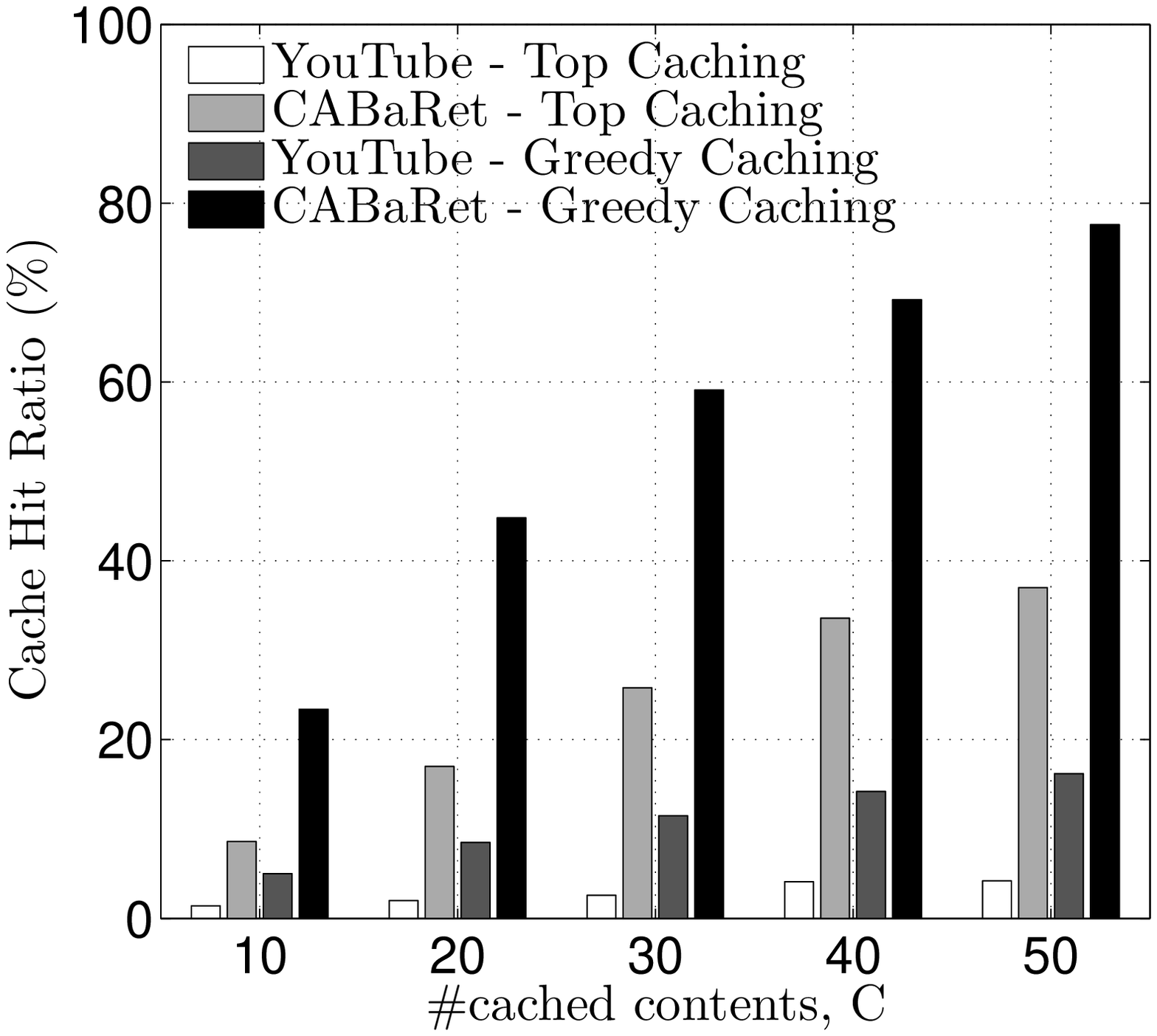}
\caption{CHR vs. \# cached contents ($p_{i}$ \textit{uniform}, $W_{BFS}$=50, $D_{BFS}$=2).}
\label{fig:chr-vs-c-greedy}
\end{minipage}
\end{figure}


\section{Caching Optimization}
\label{sec:joint}
In this section, we extend our study by considering the scenario where \ourAlgo and the caches are controlled by the same entity, e.g., the network operator. Network operator-controlled caching is the most commonly considered scenario in related work for MEC; although in most of today's architectures the caches are actually controlled by the CP. 

In this scenario, the network operator can optimize caching decisions, thus further increasing the efficiency of \ourAlgo recommendations. Note that still there is no need for collaboration between the operator and the CP (e.g., possessing full knowledge of the RS), assumed in previous works~\cite{sermpezis-sch-globecom,chatzieleftheriou2017caching,giannakas-wowmom-2018}.

In the following, we first analytically formulate and study the problem of optimizing the caching policy, and propose an approximation algorithm with provable performance guarantees (\secref{sec:problem-algorithm}). We then evaluate the performance of this \textit{joint caching and recommendation} approach (\secref{sec:results-greedy}).

\subsection{Optimization Problem \&  Algorithm}
\label{sec:problem-algorithm}

Let a content catalog $\mathcal{V}$, $V=|\mathcal{V}|$, and a content popularity vector $\mathbf{q} = [q_{1}, ..., q_{V}]^{T}$. Let $\mathcal{L}(v)\subseteq \mathcal{V}$ be the set of contents that are explored by \ourAlgo (at \textit{line 1}) for a content $v\in\mathcal{V}$, and denote $\mathcal{L} = \bigcup_{v\in\mathcal{V}}\mathcal{L}(v)$.

For some set of cached contents $\mathcal{C}\subseteq\mathcal{V}$, and a content $v$, \ourAlgo returns a list of recommendations $\mathcal{R}(v)$ ($|\mathcal{R}(v)|=N$), in which at most $N$ contents $c\in \mathcal{C}\cap\mathcal{L}(v)$ appear at the top of the list. Therefore, CHR can be expressed as
\begin{equation}\label{eq:chr-generic}
CHR(\mathcal{C}) = \sum_{v\in\mathcal{V}}q_{v}\sum_{i=1}^{N(\mathcal{C},v)} p_{i}
\end{equation}
where $N(\mathcal{C},v) = \min\{|\mathcal{C}\cap\mathcal{L}(v)|, N\}$, and $p_{i}$ is the probability for a user to select the $i^{th}$ recommended content.

Then, the problem of optimizing the caching policy (to be jointly used with \ourAlgo), is formulated as follows:
\begin{equation}\label{optim-problem}
max_{\mathcal{C}} ~CHR(\mathcal{C})~~~~~s.t., |\mathcal{C}|\leq C
\end{equation}
where $C$ is the capacity of the --MEC-- cache.
We prove the following for the optimization problem of \eq{optim-problem}.
\begin{mylemma}
The optimization problem of~\eq{optim-problem}: (i) is NP-hard, (ii) cannot be approximated within $1-\frac{1}{e}+o(1)$ in polynomial time, and (iii) has a monotone (non-decreasing) submodular objective function, and is subject to a cardinality constraint.
\end{mylemma}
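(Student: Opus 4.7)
The plan is to establish~(iii) first and then derive (i) and (ii) by a reduction from Maximum $k$-Coverage, which is the canonical hard problem in the class that~(iii) places us in; the structural property also sets up the $(1-1/e)$-greedy as the approximation algorithm promised in~\secref{sec:problem-algorithm}.

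For~(iii), monotonicity is immediate from \eq{eq:chr-generic}: for $\mathcal{A}\subseteq\mathcal{B}$ and each $v$, $|\mathcal{A}\cap\mathcal{L}(v)|\leq|\mathcal{B}\cap\mathcal{L}(v)|$, hence $N(\mathcal{A},v)\leq N(\mathcal{B},v)$, and with $p_i,q_v\geq 0$ the outer sum is non-decreasing. For submodularity I would use the marginal-gain characterization: take $\mathcal{A}\subseteq\mathcal{B}$ and $x\notin\mathcal{B}$, and compute the contribution of each $v$ to $CHR(\mathcal{A}\cup\{x\})-CHR(\mathcal{A})$ and $CHR(\mathcal{B}\cup\{x\})-CHR(\mathcal{B})$ separately. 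If $x\notin\mathcal{L}(v)$, both contributions vanish. If $x\in\mathcal{L}(v)$, the marginal gain from $v$ is $q_v\cdot p_{|\mathcal{A}\cap\mathcal{L}(v)|+1}$ for $\mathcal{A}$ and $q_v\cdot p_{|\mathcal{B}\cap\mathcal{L}(v)|+1}$ for $\mathcal{B}$, with the convention $p_i:=0$ for $i>N$ (which absorbs the $\min\{\cdot,N\}$ cap). Since $|\mathcal{A}\cap\mathcal{L}(v)|\leq|\mathcal{B}\cap\mathcal{L}(v)|$ and the sequence $p_i$ is non-increasing under both user-preference distributions considered in~\secref{sec:experiments-setup} (uniform, or Zipf with $\alpha\geq 0$), the $\mathcal{A}$-marginal dominates the $\mathcal{B}$-marginal; summing over $v$ with weights $q_v$ yields the submodular inequality.

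For (i) and (ii) I would reduce from Maximum $k$-Coverage. Given an instance $(\mathcal{U},\{S_1,\dots,S_m\},k)$, construct a caching instance with catalog $\mathcal{V}=\mathcal{U}\cup\{x_1,\dots,x_m\}$, popularities $q_u=1/|\mathcal{U}|$ for $u\in\mathcal{U}$ and $q_{x_j}=0$, recommendation parameters $N=1$ and $p_1=1$, and relation sets $\mathcal{L}(u)=\{x_j:u\in S_j\}$ for $u\in\mathcal{U}$, $\mathcal{L}(x_j)=\emptyset$. Set cache capacity $C=k$. Since no element of $\mathcal{U}$ appears in any $\mathcal{L}(\cdot)$, caching a $u$ contributes nothing to \eq{eq:chr-generic}, so an optimal $\mathcal{C}$ may be restricted to $\{x_1,\dots,x_m\}$; for such $\mathcal{C}$, $CHR(\mathcal{C})=\tfrac{1}{|\mathcal{U}|}\bigl|\bigcup_{x_j\in\mathcal{C}}S_j\bigr|$, so the caching optimum equals (up to the $1/|\mathcal{U}|$ scaling) the Max-Coverage optimum. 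This transfers the NP-hardness of Max-$k$-Coverage to give~(i), and Feige's $(1-1/e+\varepsilon)$-inapproximability threshold for Max-Coverage to give~(ii).

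The step I expect to need the most care is the submodularity argument, because it hinges on the monotonicity of $p_i$: if $p_i$ were allowed to increase in $i$, a single $v$ could witness a strictly larger marginal gain under $\mathcal{B}$ than under $\mathcal{A}$ and break the diminishing-returns property on that coordinate. Under the modeling assumptions used throughout the paper this issue does not arise, and once~(iii) is in hand the Nemhauser--Wolsey--Fisher greedy immediately yields a $(1-1/e)$-approximation for \eq{optim-problem}, which is tight (up to the $o(1)$ term) against the barrier established in~(ii).
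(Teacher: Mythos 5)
Your proposal is correct and follows essentially the same route as the paper, which simply states that (i) and (ii) follow by reduction from maximum set coverage and that (iii) follows by standard submodularity arguments; your write-up just fills in the details the paper leaves to the cited references. Your observation that submodularity hinges on the selection probabilities $p_i$ being non-increasing (true for the uniform and Zipf choices used in the paper) is a correct and worthwhile point that the paper's one-line proof leaves implicit.
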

\begin{proof}
Items (i) and (ii) of the above lemma, are proven by reduction to the \textit{maximum set coverage} problem, and we prove item (iii) using standard methods (see, e.g.,~\cite{femto,sermpezis-sch-globecom}).
\end{proof}

If we design a greedy algorithm that starts from an empty set of cached contents $\mathcal{C}_{g}=\emptyset$, and at each iteration it augments the set $\mathcal{C}_{g}$ (until $|\mathcal{C}_{g}|= C$) as follows:
\begin{equation}\label{eq:greedy-algo}
\mathcal{C}_{g}\leftarrow \mathcal{C}_{g}\cup \arg\max_{v\in\mathcal{V}} CHR(\mathcal{C}_{g}\cup\{v\}),
\end{equation}
then the properties stated in item (iii) satisfy that it holds~\cite{krause2012submodular}
\begin{equation}\label{eq:greedy-bound}
CHR(\mathcal{C}_{g}) \geq \left(1-\frac{1}{e}\right)\cdot CHR(\mathcal{C}^{*})
\end{equation}
where $\mathcal{C}^{*}$ the optimal solution of the problem of~\eq{optim-problem}. 

\textit{Remark}: While \eq{eq:greedy-bound} gives a lower bound for the performance of the greedy algorithm, in practice greedy algorithms have been shown to perform often very close to the optimal.

\subsection{Evaluation of Greedy Caching}\label{sec:results-greedy}

Calculating the CHR from \eq{eq:chr-generic} requires running a BFS (\ourAlgo, \textit{line 1}) and generating the lists $\mathcal{L}(v)$, for every content $v\in \mathcal{V}$. In practice, for scalability reasons, the most popular contents (i.e., with high $q_{i}$) can be considered by the greedy algorithm in the calculation of the objective function \eq{eq:chr-generic}, since those contribute more to the objective function. However, any video in the catalog is still candidate to be cached, e.g., a video with low $q_{i}$ can bring a large increase in the CHR through its association with many popular contents.

In fact, in our experiments, for the calculation of \eq{eq:chr-generic}, we consider only the 50 most popular videos, for which we set $q_{i} = \frac{1}{50}$. Nevertheless, in the different scenarios we tested, only 10\% to 30\% of the cached videos (selected by the greedy algorithm) were also in the top 50 most popular.

In Fig.~\ref{fig:chr-vs-c-greedy}, we compare the achieved CHR when the cache is populated according to the greedy algorithm of \eq{eq:greedy-algo} (\textit{Greedy Caching}), and with the top most popular videos (\textit{Top Caching}).
~\textit{Greedy caching} always outperforms \textit{top caching}, with an increase in the CHR of around a factor of 2 
for uniform video selection (for the Zipf($a$=1) scenarios we tested, the CHR values are even higher, and the relative performance is 1.5 times higher). This clearly demonstrates that the gains from joint recommendation and caching~\cite{sermpezis-sch-globecom,chatzieleftheriou2017caching}, are applicable even in simple practical scenarios (e.g., \ourAlgo \& greedy caching). Finally, while \textit{greedy caching} increases the CHR even with regular YouTube recommendations, the CHR is still less than $50\%$ of the \ourAlgo case with \textit{top caching}. This further stresses the benefits from \ourAlgo's cache-aware recommendations.



\section{Related Work}
\label{sec:related}

In the recent works of~\cite{sermpezis-sch-globecom,chatzieleftheriou2017caching,giannakas-wowmom-2018}, caching and recommendation of contents is considered for mobile networks. \cite{sermpezis-sch-globecom,chatzieleftheriou2017caching} jointly optimize the caching and recommendation policies, while~\cite{giannakas-wowmom-2018} optimizes cache-aware recommendations for sequential requests. Our work is complementary to these works: we propose an approach for deploying joint caching and recommendation in practice, without collaboration between network operators and content providers. Moreover, our extensive measurements over the real YouTube service show that the gains demonstrated in~\cite{sermpezis-sch-globecom,chatzieleftheriou2017caching,giannakas-wowmom-2018} can be achieved even today, using a simple and lightweight system.   

\cite{cache-centric-video-recommendation} studies how to improve the efficiency of YouTube caches, by reordering the list of recommendations. \ourAlgo goes beyond reordering, and leverages more information that is available through the YouTube API; as our results show, this can bring substantial gains in the CHR.


Finally, research on the performance of the YouTube CDN in relation to network parameters~\cite{cache-centric-video-recommendation,adhikari2011you,
mok2018revealing}, indicates that our approach can be relevant to wireline networks as well. For instance, on top of classic load-balancing, \textit{network-aware recommendations} (e.g., taking into account routing policies~\cite{adhikari2011you,mok2018revealing}, inter-domain load balancing~\cite{
mok2018revealing}, cache locations~\cite{adhikari2011you
}) can be used to improve the user QoE (e.g., latency).


\section{Conclusion}
\label{sec:conclusions}
In this paper, we proposed an approach that enables joint caching and recommendation in MEC, without requiring tight collaboration of network and content providers. We considered the YouTube service as our use case, and showed that significant gains can be achieved by leveraging available information from its recommendation system.

The effect of geographical and temporal factors on the gains from cache-aware recommendations, can be further investigated in the future to extend our initial experimental results. 
Moreover, we believe that analytically studying the interplay between content relations and potential gains with \ourAlgo, as well as extending our approach to other online content services, are interesting research directions. 

\section*{Acknowledgements} This work has been funded by the European Research Council grant agreement no. 338402.




\end{document}